\newcommand{\lt}{<}
\newtheorem*{prop}{Proposition}
\begin{document}
\preprint{MIFP--07--19\\ UTTG-03-07\\}

\title{Wormholes in Maximal Supergravity}

\author{Aaron Bergman
        \thanks{Research supported by NSF Grants PHY-0505757, PHY-0555575 and by Texas A\&M Univ.}
        \address{George P. \& Cynthia W. Mitchell Institute for
         Fundamental Physics\\
        Texas A\&M University\\
        College Station, TX 77843-4242\\
        \email{abergman@physics.tamu.edu}}
        \\
        Jacques Distler
        \thanks{Research supported in part by NSF Grant PHY-0455649.}
        \address{Theory Group, Physics Department\\
        University of Texas at Austin\\
        Austin, TX 78712\\
        \email{distler@golem.ph.utexas.edu}}}

\Abstract{In this brief note, we reconsider the problem of finding Euclidean wormhole solutions to maximal supergravity in $d$ dimensions. We find that such solutions exists for all $d\leq 9$. However, we argue that, in toroidally-compactified string theories, these saddle points never contribute to the path integral because of a tension with U-duality.}

\maketitle
\newpage


\hypertarget{intro}{}\section{{Introduction}}\label{intro}

The relation of the path integral formulation of quantum gravity to string theory has long been mysterious. While both the Euclidean and Lorentzian path integrals for gravitational theories have longstanding pathologies, by analogy with well-understood situations in quantum field theory, they can be used to inspire effects that may or may not be present in a true theory of quantum gravity. Thus, it is an extremely interesting question to see if such effects are present in string theory.

Towards this end, Arkani-Hamed \emph{et al} in \cite{AOP} examined the situation of Euclidean wormhole \cite{GS1,LRT,H1,H2}. These should represent stationary points of the Euclidean path integral of quantum gravity. One can try to understand them either through Wick rotation or as contributions to the path integral obtained by ``deforming the contour'' analogously to the usual stationary phase approximation for finite dimensional integrals. One might have naively thought such solutions to have led to bilocal terms in the effective action, but Coleman \cite{Co} (and also \cite{GS2}) instead reinterpreted them in terms of modifications of the coupling constants of local terms in the lagrangian. Arkani-Hamed \emph{et al} demonstrate the existence of such wormhole solutions in compactifications of string theory on higher dimensional tori and argue that the contribution of such solutions represents a contradiction with the predictions of the AdS/CFT conjecture.

The purpose of this note is two-fold. First, we demonstrate further wormhole solutions that were missed in \cite{AOP} and in earlier works \cite{MM,GS3,Ta}. In particular, we are able to find wormhole solutions in toroidal compactifications of Type II on $T^{10-d}$ for all $d\leq 9$ and -- more generally -- in any compactification preserving enough supersymmetry such that the scalars in the gravity multiplet take their values in a Riemannian symmetric space.

String theory, in contrast to supergravity, possesses a discrete gauge symmetry, termed U-duality, which reduces the true moduli space of the scalars to a quotient of the symmetric space (termed a locally symmetric space). Our second goal is to argue that none of the wormhole solutions (those presented in \cite{AOP} or the new ones presented here) can be assigned well-defined transformation properties under U-duality. It, therefore, seems unlikely that there exists a procedure (involving, say, summing over wormhole configurations) which would be compatible with U-duality invariance. Hence, we conclude that these wormholes cannot contribute to the quantum gravity path integral in any dimension, generalizing the result of \cite{AOP}.

\hypertarget{constructing_the_solution}{}\section{{Constructing the Solution}}\label{constructing_the_solution}

\hypertarget{background}{}\subsection{{Background}}\label{background}

We will consider Type II string theory, compactified on $T^{10-d}$. The low energy physics is governed by maximal supergravity in $d$ dimensions. At the level of the supergravity, there is a continuous symmetry group, $G$, which is the split real form of some semisimple Lie group. The scalar fields take values in a nonlinear $\sigma$-model whose target space is the symmetric space, $\mathcal{M} = G/K$, where $K$ is the maximal compact subgroup of $G$. $G$ acts as the group of continuous isometries of this space. The various $G$ and $K$, for different choices of $d$ are listed in the table below.

\begin{center}
\begin{tabular}{|r|c|c|c|c|c|c|c|}
\hline
d&3&4&5&6&7&8&9\\
\hline\hline
G&$E_{8,8}$&$E_{7,7}$&$E_{6,6}$&$Spin(5,5)$&$SL_{5}(\mathbb{R})$&$SL_{3}(\mathbb{R})\times SL_{2}(\mathbb{R})$&$SL_{2}(\mathbb{R})$\\
\hline
K&$Spin(16)$&$SU(8)$&$Sp(4)$&$Spin(5)\times Spin(5)$&$Spin(5)$&$SU(2)\times SO(2)$&$SO(2)$\\
\hline
\end{tabular}
\end{center}

In string theory, the continuous $G$-symmetry is broken by higher-derivative corrections to the low-energy supergravity. What remains is a discrete group, $G(\mathbb{Z})$, which acts as a discrete gauge symmetry. Thus, the the correct moduli space is the quotient $\mathcal{M}_{\text{true}} = G(\mathbb{Z})\,\backslash\: G /K$. For the purpose of finding solutions to the supergravity, we can ignore these discrete identifications and work on the covering space $\mathcal{M}=G/K$. We will discuss the implications of U-duality in the following section.

We begin by reviewing the work of \cite{AOP}. There, Arkani-Hamed \emph{et al} look for wormhole solutions of the supergravity theory. In particular, they take as an ansatz an $O(d)$-invariant metric of the form

\begin{equation}
d r^2 + a^2(r) d\Omega_{d-1}^2\ .
\label{metricAnsatz}\end{equation}
The equations of motion for the scalars coupled to gravity are

\begin{equation}
\begin{split}
\frac{{a'}^2}{a^2}- \frac{1}{a^2} \frac{G_{i j}\phi^{i\prime}\phi^{j\prime}}{2(d-1)(d-2)} &=0\ ,\\
\left(a^{d-1} G_{i j} \phi^{j\prime}\right)' - \frac{1}{2} a^{d-1} G_{j k,i} \phi^{j\prime} \phi^{k\prime} &= 0
\end{split}
\label{eom}\end{equation}
where $G_{i j}$ is the metric on the scalar manifold, $\mathcal{M}$, and $'$ indicates derivative with respect to $r$.

If we define $\tau$ via

\begin{equation}
\frac{d r}{d\tau} = a(r)^{d-1}
\label{tauDef}\end{equation}
and use \eqref{eom}, we recognize that the second equation is proportional to the geodesic equation on $\mathcal{M}$. Thus, the scalars travel along a geodesic, and a constant of the motion is given by
\begin{equation}
C = a^{2(d-1)} G_{i j}\phi^{i\prime}\phi^{j\prime}\ .
\label{cdef}\end{equation}
A wormhole solution (in flat space) has the following asymptotics:
\begin{displaymath}
\begin{gathered}
   a(r)^2 \sim r^2,\quad r\to \pm\infty\ ,\\
   a(r=0) = a_0\ .
\end{gathered}
\end{displaymath}
These constraints are satisfied if we set $C$ in \eqref{cdef} to

\begin{equation}
C = -2(d-1)(d-2) a_0^{2(d-2)} \lt 0\ .
\label{cVal}\end{equation}
There is a further constraint, however. In \cite{AOP}, this ansatz is used to calculate the distance the moduli must travel between the two asymptotic regions of the wormhole solution:

\begin{displaymath}
\begin{aligned}
D[\phi(r=+\infty), \phi(r=-\infty)]&= 2 D[\phi(r=+\infty), \phi(r=0)]\\ 
&= \pi \sqrt{\frac{2(d-1)}{d-2}}\ .
\end{aligned}
\end{displaymath}
Thus, in order to admit a wormhole solution, $\mathcal{M}$ must possess a timelike geodesic of at least this length,

\begin{equation}
\Delta\tau \geq \pi \sqrt{\frac{2(d-1)}{d-2}}\ .
\label{bound}\end{equation}
Of course, since the scalar manifold, $\mathcal{M}= G/K$, in the supergravity theory is Riemannian, everything is spacelike, and there are no \emph{real} wormhole solutions. Instead, we look for complex saddle points. That is, we consider Wick-rotating one (or more) of the scalar directions. We will proceed by choosing a cyclic coordinate, $\phi_0$, on which the metric does not depend and Wick rotating along that coordinate. Equivalently, we want to find a coordinate Killing vector, $\partial/\partial\phi_0$. Since $\mathcal{M}$ is a symmetric space there are many Killing vectors to choose from.

\hypertarget{a_general_solution}{}\subsection{{A general solution}}\label{a_general_solution}

The general theory of symmetric spaces\footnote{See, for example, \cite{He}.} tells us how we can find our needed Killing vector. We first choose any 1-parameter subgroup of $G$ or, equivalently, an element of the Lie algebra, $T\in\mathfrak{g}$ and study the isometry of $\mathcal{M}$ given by

\begin{equation}
e^{\phi_0 T}\ .
\label{isometry}\end{equation}
The Minkowski-signature metric is produced by replacing ${d\phi_0}^2\to -{d\phi_0}^2$ in the metric on $\mathcal{M}$.

This procedure necessitates that \eqref{isometry} act on $\mathcal{M}$ without fixed points; otherwise, the Minkowski metric will be singular. A Cartan decomposition of the Lie algebra $\mathfrak{g}$ gives

\begin{equation}
\mathfrak{g} = \mathfrak{k} \oplus \mathfrak{r}
\label{cartanDecomp}\end{equation}
where $\mathfrak{k}$ is the Lie algebra of $K$, a maximal compact subgroup, and

\begin{displaymath}
[\mathfrak{k},\mathfrak{r}]\subset \mathfrak{r},\qquad [\mathfrak{r},\mathfrak{r}]\subset \mathfrak{k}\ .
\end{displaymath}
In addition, the Cartan form is negative-definite on $\mathfrak{k}$
\begin{displaymath}
K(T,T) \lt 0,\quad \forall 0\neq T\in \mathfrak{k}
\end{displaymath}
reflecting the compactness of $K$. The Iwasawa decomposition further decomposes
\begin{equation}
\mathfrak{r} = \mathfrak{a} \oplus \mathfrak{n}
\label{iwasawaDecomp}\end{equation}
where $\mathfrak{a}$ is an abelian subalgebra of $\mathfrak{g}$ and $\mathfrak{n}$ is nilpotent, obeying
\begin{displaymath}
K(T,T) = 0,\quad \forall T\in \mathfrak{n}\ .
\end{displaymath}

We have the following
\begin{prop}
The action on $\mathcal{M}$ defined by $T \in \mathfrak{g}$ is fixed-point free action if $K(T,T) \geq0$.
\end{prop}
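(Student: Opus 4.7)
The plan is to unpack the geometric content of ``fixed-point free'' and then reduce the claim to the sign of the Killing form on the Ad-orbit of $T$. The discussion immediately preceding the proposition shows that the Wick-rotated metric becomes singular precisely where the fundamental Killing vector field
\begin{equation*}
X_T(gK) \;=\; \left.\tfrac{d}{dt}\right|_{t=0} e^{tT}\,gK
\end{equation*}
vanishes, so I read ``fixed-point free'' to mean: $X_T$ is nowhere-zero on $\mathcal{M}$ (the interesting case is $T\neq 0$, which I assume throughout).

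My first step is to locate the zeros of $X_T$. Because left translation $L_{g^{-1}}$ is an isometry of $\mathcal{M}$ sending $gK$ to $eK$, a direct calculation gives
\begin{equation*}
(L_{g^{-1}})_{\ast} X_T(gK) \;=\; \left.\tfrac{d}{dt}\right|_{t=0} e^{t\,\mathrm{Ad}(g^{-1})T}\,K.
\end{equation*}
Under the Cartan splitting \eqref{cartanDecomp}, $T_{eK}\mathcal{M} \cong \mathfrak{g}/\mathfrak{k} \cong \mathfrak{r}$, and the right-hand side is the $\mathfrak{r}$-component of $\mathrm{Ad}(g^{-1})T$. Therefore $X_T(gK) = 0$ if and only if $\mathrm{Ad}(g^{-1})T \in \mathfrak{k}$.

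The proposition then follows from two facts already recorded in the excerpt: the Killing form is Ad-invariant, so $K(T,T) = K(\mathrm{Ad}(g^{-1})T,\mathrm{Ad}(g^{-1})T)$; and the Killing form is negative-definite on $\mathfrak{k}$. If $X_T(gK)=0$ for some $g$, the previous step puts $\mathrm{Ad}(g^{-1})T$ inside $\mathfrak{k}$, so the right-hand expression is $\le 0$, with equality only for $T=0$. Taking the contrapositive, $K(T,T)\ge 0$ with $T\neq 0$ forces $X_T$ to have no zeros anywhere on $\mathcal{M}$.

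There is no substantive obstacle — the entire argument is the observation that Ad-invariance of $K$ translates the sign condition on $T$ into the same sign condition on every Ad-orbit representative. The only step that wants a moment of care is the identification of $X_T(gK)$ with the $\mathfrak{r}$-component of $\mathrm{Ad}(g^{-1})T$, which uses $G$-invariance of the metric on $\mathcal{M}$ together with the Cartan splitting; the finer Iwasawa refinement \eqref{iwasawaDecomp} plays no explicit role in the proof itself but is what makes the proposition useful in the construction that follows, since it supplies a large supply of $T$'s with $K(T,T)\ge 0$ (namely those in $\mathfrak{a}\oplus\mathfrak{n}$).
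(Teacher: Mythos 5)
Your proposal is correct and follows essentially the same route as the paper's proof: both reduce a fixed point at $gK$ to the condition $\mathrm{Ad}(g^{-1})T\in\mathfrak{k}$ and then invoke Ad-invariance of the Killing form together with its negative-definiteness on $\mathfrak{k}$ to force $K(T,T)<0$. The only difference is presentational --- you phrase it infinitesimally via the vanishing of the Killing vector field $X_T$, while the paper works directly with fixed points of the one-parameter subgroup, and these coincide.
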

\begin{proof} Assume that $g_0K$ is a fixed point of the action, \emph{i.e.},

\begin{displaymath}
e^{s T} g_0 k_1 = g_0 k_2
\end{displaymath}
for $T\neq 0$ and some $k_{1,2}\in K$. It follows that

\begin{displaymath}
g_0^{-1} T g_0 \in \mathfrak{k}\ ,
\end{displaymath}
and hence

\begin{displaymath}
K(g_0^{-1} T g_0,g_0^{-1} T g_0) \lt 0\ .
\end{displaymath}
Since the Killing form is invariant under the adjoint action of $G$, this implies that $K(T,T) \lt 0$. \end{proof}

This is enough to see that the desired Wick rotation exists. We can be more explicit, however. For any real Lie group $G$, we define the generalized transpose

\begin{equation}
T^\# = \begin{cases}-T & T\in\mathfrak{k}\\ T & T\in\mathfrak{r}\end{cases}
\label{genTranspose}\end{equation}
which is equal to minus the Cartan involution of the Lie algebra. It can be extended to the group in a manner that satisfies

\begin{displaymath}
\left(e^X\right)^\# = e^{X^\#}\ .
\end{displaymath}
The invariant metric on $G/K$ is

\begin{equation}
{d s}^2 = \frac{1}{2} Tr (m^{-1} d m)^2
\label{invariantmetric}\end{equation}
where

\begin{displaymath}
m = g g^\#\ .
\end{displaymath}
The transformation

\begin{equation}
g \to e^{\phi_0 T} g
\label{isomGen}\end{equation}
is an isometry of \eqref{invariantmetric}, and the coordinate $\phi_0$ is cyclic. As above, if $K(T,T)\geq 0$, then \eqref{isomGen} is fixed-point free, and we can Wick rotate ${d\phi_0}^2\to -{d\phi_0}^2$ in \eqref{invariantmetric}. If, moreover, $T$ is in the orthogonal complement of $\mathfrak{k}$, then the flow through the identity, $g(\phi_0) = e^{\phi_0 T}$, is a geodesic\footnote{In other words, if we choose $T$ such that $K(T, X) = 0, \forall X\in \mathfrak{k}$, then $e^{s T}$ is a geodesic on $G/K$. This result is originally due to Cartan. See, for example, \cite{Ma}.}  of infinite length in either the Riemannian or Minkowskian signature.

Arkani-Hamed \emph{et al} take $T\in \mathfrak{n}$ which implies that $K(T,T) =0$, but the flow is no longer necessarily geodesic. Moreover, in their case, the timelike geodesics of the Wick-rotated metric are bounded in length, and, in many case of possible interest, fail to satisfy the inequality \eqref{bound}.

\hypertarget{an_example}{}\subsection{{An example}}\label{an_example}

To make things more concrete, let us apply this to the simple example of $\mathcal{M}= SL_{2}(\mathbb{R})/SO(2)$. We can parametrize the coset space as

\begin{displaymath}
g = \exp{\begin{pmatrix} v/2 & 0\\ 0 &-v/2\end{pmatrix}}
\cdot \exp{\begin{pmatrix} 0 & u/2\\ u/2 &0\end{pmatrix}}
\cdot O
\end{displaymath}
where $O$ is an $SO(2)$ matrix. Setting

\begin{displaymath}
m = g g^T = \begin{pmatrix}
e^v \cosh u & \sinh u\\
\sinh u& e^{-v} \cosh u\end{pmatrix}\ ,
\end{displaymath}
the metric on $\mathcal{M}$ is

\begin{equation}
{d s}^2 = \frac{1}{2} Tr (m^{-1} d m)^2 = {d u}^2 + \cosh^2 u {d v}^2\ .
\label{sl2metric}\end{equation}
The coordinate $v$ is cyclic and corresponds to the 1-parameter subgroup generated by $T=\left(\begin{smallmatrix}1&0\\0&-1\end{smallmatrix}\right)\in\mathfrak{a}$. We can Wick rotate $v\to i v$ and obtain a Minkowski-signature metric. Moreover, $u=0, v(\tau)=\tau$ is a geodesic of infinite length in either signature.

We can perform the following change of coordinates

\begin{equation}
\rho = \sqrt{\frac{\cosh u \cosh v -1}{\cosh u \cosh v +1}},\quad \sin\theta = \frac{\cosh u \sinh v}{\sqrt{\cosh^{2} u \cosh^{2} v -1}}\ .
\label{discCoords}\end{equation}
This gives the usual Poincar\'e metric on the unit disk

\begin{equation}
{d s}^2 = \frac{4({d\rho}^2 +\rho^2 {d\theta}^2)}{(1-\rho^2)^2}
\label{discMetric}\end{equation}
in which $\theta$ is a cyclic coordinate corresponding to the the 1-parameter subgroup generated by $T = \left(\begin{smallmatrix}0 & -1\\ 1 & 0\end{smallmatrix}\right)\in\mathfrak{k}$. One cannot Wick rotate $\theta$, however, as the resulting Minkowsi metric is singular at $\rho=0$.

If we make the further change of coordinates

\begin{equation}
\rho e^{i\theta} = \frac{i z +1}{z +i}\ ,
\label{uhpCoords}\end{equation}
we obtain the familiar metric on the upper half plane

\begin{equation}
{d s}^2 = \frac{d z\ d\overline{z}}{{(\mathrm{Im\ } z)}^2}
\label{UHP}\end{equation}
where $x=\mathrm{Re\ }z$ is cyclic corresponding to the nilpotent generator, $T = \left(\begin{smallmatrix}0&1\\ 0 & 0\end{smallmatrix}\right)\in \mathfrak{n}$. We obtain the desired Minkowski-signature metric by Wick-rotating $x\to i x$. Timelike geodesics in the resulting Minkowski-signature metric take the form

\begin{displaymath}
\begin{aligned}
  x(\tau) &= x_0 - y_0 \tan\tau\ , \\
  y(\tau) &= \frac{y_0}{\cos\tau}
\end{aligned}
\end{displaymath}
with $-\pi/2 \lt \tau \lt \pi/2$ and have length $\delta\tau = \pi$.

\hypertarget{the_question_of_uduality}{}\section{{The question of U-duality}}\label{the_question_of_uduality}

We have seen that, in contrast to some claims in the literature (reiterated by Arkani-Hamed \textit{et al}), the Euclidean supergravity theory has complex wormhole solutions for any $d\leq 9$. We must then ask the question: do such complex saddle points of the Euclidean action contribute to the path integral? Arkani-Hamed \emph{et al} adduced evidence from AdS/CFT that they do not, at least for the case of $AdS_3\times S^2\times T^4$. We would like to argue, more generally, that they never contribute.

As discussed above, the true moduli space for the scalar fields in string theory compactified on $T^d$ is

\begin{equation}
\mathcal{M}_0 = G(\mathbb{Z})\setminus G/K
\label{trueM}\end{equation}
where the U-duality group, $G(\mathbb{Z})$, acts a discrete gauge symmetry of the theory.

There is no problem studying geodesics on $\mathcal{M}_0$ by looking at geodesics on the covering space. This is not sufficient, however, as we are interested in performing a Wick rotation on the geodesic. We will present evidence that this will, in general, be incompatible with U-duality invariance.

Note that we do not mean ``incompatible'' in some trivial sense. Any particular scalar field configuration, whether it is $\phi=\text{const}$ or a ``rolling'' configuration, $\phi=\phi(r)$, such as we are considering, is not invariant under the U-duality group. Instead, U-duality maps one such configuration into another. In particular, it maps solutions of the supergravity equations of motion into one another.

This, rather trivial, lack of invariance is not what is at issue. Indeed, if the example of D-instantons is any guide, the resolution is to sum over saddle points. Any particular D-instanton breaks S-duality modular invariance, but the sum has the correct modular properties under S-duality (see, e.g. \cite{Green:1997di}).

Our problem here is that we are interested in finding a lift of the U-duality group to some ``complexification'' (loosely speaking) of M or, better, some complexification of the space of scalar field configurations. As before, any particular \emph{complex} configuration of $\phi$ will not be invariant under U-duality. But one expects that U-duality will map one such configuration into another (and we might hope that a suitable sum over such configurations restores U-duality invariance).

This, we argue, is what fails to be the case. It is that failure that we we are referring to when we claim that complexifying $\mathcal{M}$ is incompatible with U-duality.

In special cases, complexification might be compatible with some subgroup of the U-duality group. In the case of the nilpotent generator, \eqref{UHP}, one finds that the subgroup

\begin{displaymath}
\left\{\begin{pmatrix}1&n\\0& 1\end{pmatrix}, n\in\mathbb{Z}\right\}\subset SL_{2}(\mathbb{Z})
\end{displaymath}
is preserved by the Wick rotation. For the more interesting case of $T\in \mathfrak{a}$, however, the U-duality group is broken completely.

We suspect that there is no extension of the discrete gauge symmetry of string theory to complex values of the fields. This means that we cannot ``deform the contour'' (to use the familiar metaphor of steepest descent) to pick up these saddle points as the semiclassical approximation to the path integral. For this reason, these complex saddle points of the supergravity theory cannot contribute to the quantum gravity path integral.

We can see what the problem is more concretely for the simplest case of $SL_{2}(\mathbb{R})/SO(2)$. Our choices of Wick rotations consists of choosing a basepoint, $g_0 K\in \mathcal{M}$, and an element, $T$, of the Lie algebra (up to scale), satisfying $K(T,T) \geq 0$. Such a choice gives rise to a cyclic coordinate, $\phi_0$. Our ``partial complexification'' of $\mathcal{M}$ then replaces the real variable $\phi_0$ with a complex one, $\tilde{\phi}_0$. We obtain in this way a 3-manifold, $\widetilde{\mathcal{M}}$, with a \emph{nonsingular} complex bilinear form on its tangent space which is our original metric with ${d\phi_0}^2$ replaced by ${d\tilde{\phi}_0}^2$. The bilinear form so obtained reduces to our original metric \eqref{invariantmetric} on the subspace $\tilde{\phi}_0 \in \mathbb{R}$ and to the ``Wick-rotated'' Minkowskian metric for $\tilde{\phi}_0 \in i\mathbb{R}$.

What we seek is an action of $SL_{2}(\mathbb{Z})$ on $\widetilde{\mathcal{M}}$ which

\begin{itemize}%
\item reduces to the usual action of $SL_{2}(\mathbb{Z})$ on $\mathcal{M}\subset\widetilde{\mathcal{M}}$, and
\item is an isometry of the complex bilinear form on $\widetilde{\mathcal{M}}$.

\end{itemize}
One can check that this is impossible by a brute force computation. Consider, for example, the nilpotent case, \eqref{UHP}. Writing $z=x+i y$, the action of $SL_{2}(\mathbb{Z})$ on $\mathcal{M}$ is

\begin{equation}
\begin{split}
  x &\to \frac{(a x+b)(c x+d)+ a c y^2 }{(c x+d)^2 +(c y)^2}\ ,\\
  y &\to \frac{y}{(c x+d)^2 +(c y)^2}
\end{split}
\label{sl2z}\end{equation}
for $\left(\begin{smallmatrix}a&b\\c&d\end{smallmatrix}\right)\in SL_{2}(\mathbb{Z})$. We would like to generalize this to complex $x$ in such a way that the result is an isometry of the complex bilinear form,

\begin{displaymath}
{d s}^2 = \frac{ {d x}^2 + {d y}^2}{y^2}\ .
\end{displaymath}
The latter demands that \eqref{sl2z} depend holomorphically on $x$. This implies, however, that the transformation for $y$ in \eqref{sl2z} is not real unless $c=0$. Thus, only this subgroup of the U-duality group $SL_{2}(\mathbb{Z})$ is compatible with this complexification.

Similar arguments hold in the hyperbolic case \eqref{sl2metric}.

More geometrically, we can think of the above ``partial complexification'' as follows. Choosing a $T$ allows us to write $\mathcal{M}$ as a real line bundle over base $B$ ($\partial/\partial\phi_0$ being a vertical tangent vector). $\widetilde{\mathcal{M}}$ is obtained by complexifying the fibers, resulting in a complex line bundle over the same base.

To generalize, we could Wick rotate on more than one cyclic coordinate. Thus, we give $\mathcal{M}$ the structure of a real \emph{vector} bundle

\begin{displaymath}
\begin{aligned}
V \to&\mathcal{M}\\
&\downarrow\\
&B
\end{aligned}
\end{displaymath}
of rank $k\leq r$, such that the metric on $\mathcal{M}$ takes the form

\begin{displaymath}
{d s}^2_{\mathcal{M}} = {d s}^2_B + h
\end{displaymath}
where $h$ is an metric on the vertical tangent space. $\widetilde{\mathcal{M}}$ is constructed by complexifying the fibers,

\begin{displaymath}
\begin{aligned}
V\otimes \mathbb{C} \to&\widetilde{\mathcal{M}}\\
&\downarrow\\
&B
\end{aligned}
\end{displaymath}
and extending $h$ to a $\mathbb{C}$-bilinear form.

In this context, our objective is to find a lift of the $G(\mathbb{Z})$ action on $\mathcal{M}$ to an action on the partial complexification, $\widetilde{\mathcal{M}}$ such that it acts by isometries of the complex bilinear form on $\widetilde{\mathcal{M}}$. We have not found an elegant proof of this result, but it seems clear that it follows from the above explicit computation by restricting to an $SL_{2}$ subgroup.

In closing, we should mention a possible loophole in this argument. Rather than restricting ourselves to complexifying cyclic coordinates, we could pick some particular coordinate system and complexify everything. This would not make sense for a general real manifold, but because $G/K$ is contractible, it can be covered in a single coordinate chart. For the example of $SL_{2}(\mathbb{R})/SO(2)$, we could choose the upper half plane coordinates, $x,y$  \eqref{UHP}, or the $u,v$ coordinates \eqref{sl2metric}, and promote them to complex variables. If we promote the $SL_{2}(\mathbb{Z})$ symmetry to act holomorphically on these variables, then we seem to have an answer to the above objection. Each such choice of coordinates clearly leads to a (very!) different complexification. But, even more problematically, each one seems to lead to some pathology. In the upper half plane coordinates, $y$ was supposed to be positive, and it is unclear what the proper range of values should be when we complexify it. The $u,v$ coordinates run over all real values, so there is no a-priori problem when complexifying with letting them run over all complex values. But, if we do so, then the metric \eqref{sl2metric} has singularities at $u\in i\pi\mathbb{Z}$. So, even liberalizing the rules of what it might mean to ``complexify'' the field space, $G/K$, does not appear to lead to any satisfactory solution.

\section*{Acknowledgements}

J.D.~would  like to thank J. Polchinski for discussions.  Both authors would like to thank the Aspen Center for Physics, where this work was completed.

\end{document}